\newcommand\id{\leavevmode\hbox{\small1\kern-3.3pt\normalsize1}}
\newtheorem{theorem}{Theorem}
\newtheorem{lemma}{Lemma}
\newtheorem{definition}{Definition}
\begin{document}
\title{Broadcasting of Nonlocality}

\author{Dhrumil Patel}
\affiliation{Department of Computer Science, Cornell University, Ithaca, New York 14853, USA}
\email{djp265@cornell.edu}
\affiliation{Center for Security, Theory and Algorithmic Research, International Institute of Information Technology Hyderabad, Gachibowli, Hyderabad-500032, Telangana, India}

\author{Arup Roy}
\affiliation{Department of Physics, A B N Seal College, Cooch Behar, West Bengal 736 101, India}

\author{Indranil Chakrabarty}
\affiliation{Centre for Quantum Science and Technology, International Institute of Information Technology Hyderabad, Gachibowli, Hyderabad-500032, Telangana, India}
\affiliation{Center for Security, Theory and Algorithmic Research, International Institute of Information Technology Hyderabad,
Gachibowli, Hyderabad-500032, Telangana, India}

\author{Nirman Ganguly}
\affiliation{ Department of Mathematics, Birla Institute of Technology and Science Pilani, Hyderabad Campus, Telangana-500078, India.}

\begin{abstract}
Bell nonlocality and steering are archetypal characteristics of quantum mechanics that mark a significant departure from conventional classical notions. They basically refer to the presence of quantum correlations between separated systems which violate a nonlocal inequality, the violation otherwise not possible if we restrict ourselves only to classical correlations. In view of the importance of such unique correlations one may be interested to generate more states exhibiting nonlocality starting from a few, a protocol which is termed as broadcasting. However, in the present submission, we show using universal Buzek-Hillary(BH) quantum cloning machine that, if one restricts to broadcasting through local quantum cloning, then such nonlocality cannot be broadcasted. Our study is done in the purview of the Bell-CHSH inequality and the CJWR (E.G.Cavalcanti,S.J. Jones,H.M Wiseman and M.D. Reid, Phys.Rev.A 80,032112(2009)) steering inequality. It is observed that when number of measurement settings is greater than 6, some of the output states are steerable after the application of local optimal B-H cloners. We find that, under some restrictions, if the Werner and Bell diagonal states are subjected to such procedures, then the resultant state is rendered unsteerable. We extend this study to three qubit systems and find that genuine tripartite nonlocality cannot be broadcasted using universal BH local quantum cloners, when we consider the Svetlichny's inequality. For two qubit systems, we have considered $10^5$ simulated general local unitaries over Werner and Bell-diagonal states and find that for none of these states broadcasting on nonlocality and 3-steerability is possible.
\end{abstract}

\maketitle
\section{Introduction}
Bell's inequality \cite{bell1} provides a significant yardstick for the presence of nonlocality in a system. The fact that some quantum states violate Bell's inequality and are thus nonlocal, contributed significantly to our understanding of the foundational issues in quantum mechanics \cite{bell1,bell2}. A plethora of work also exploited nonlocality from an information processing perspective leading to works on quantum cryptography \cite{quantumkeydistribution1, quantumkeydistribution2}. Quantum steering \cite{Steeringreview}, a notion first envisaged by Schrodinger \cite{SchSteering} and later given an operational interpretation in \cite{SteeringIeq1,hySteering1}, is another form of nonlocal correlations and is considered as a  hybrid between entanglement and Bell nonlocality. Alike Bell nonlocality, various steering inequalities have been proposed to witness steerability in quantum systems \cite{SteeringIeq2,SteeringIeq3,SteeringIeq4,SteeringIeq5,SteeringIeq6,SteeringIeq7,SteeringIeq8,SteeringIeq9,SteeringIeq10,SteeringIeq11}. 

Given the importance of a quantum resource, a tactical scheme could be to generate more such resources through copying. However, the no-cloning theorem in quantum mechanics \cite{wootters} puts a restriction on perfect cloning. The no-cloning theorem states that there exists no quantum mechanical process that can take two different non-orthogonal quantum states $|\psi_1 \rangle$ and $|\psi_2\rangle$ to states $|\psi_1\rangle\otimes|\psi_1\rangle$ and $|\psi_2\rangle\otimes|\psi_2\rangle$, respectively. Even though we cannot copy 
 an unknown quantum state perfectly, quantum mechanics never rules out the possibility of cloning it approximately \cite{approximatecloning1, approximatecloning2,approximatecloning3, cloningbuzek,buzek}. It also allows probabilistic cloning as one can always clone an arbitrary quantum state perfectly with some non-zero probability of success \cite{approximatecloning2,approximatecloning3,statedependentmachine}.

Buzek and Hillary introduced the concept of approximate cloning with certain fidelity. In this process, the state independent quantum copying machine was introduced by keeping the fidelity of cloning independent of the input state parameters. This machine is popularly known as universal quantum cloning machine (UQCM) \cite{buzek} which was later proven to be optimal \cite{buzekoptimal,buzekoptimal1}. Apart from this, there is also state dependent quantum cloning machine for which the quality of the copies depends on the input state parameters \cite{statedependentmachine,manish}. For the purpose of our work, we refer to these quantum cloners as Buzek-Hillary (B-H) quantum cloners after the authors who introduced them.

\par On a different but close perspective, the term broadcasting can be used in different contexts. Classical theory permits broadcasting of information, however, that is not the case for all states in quantum theory. In this context, Barnum et al. were the first to show that non-commuting mixed states do not meet the criteria of broadcasting \cite{Barnum}. It is impossible to have a process which will perfectly broadcast an arbitrary quantum state \cite{Barnum}.  Interestingly, when we talk about broadcasting of correlations (resources in general), we refer to a situation  that the correlations in a two-qubit state $\rho_{AB}$ are  broadcastable with the help of local operations if there exist two operations, $\Sigma_A$: $S(\mathbb{H}_A) \rightarrow S(\mathbb{H}_{A_{1}}\otimes \mathbb{H}_{A_{2}})$ and $\Sigma_B$: $S(\mathbb{H}_B) \rightarrow S(\mathbb{H}_{B_{1}} \otimes \mathbb{H}_{B_{2}})$ such that $I(\rho_{A_1B_1})$ = $I(\rho_{A_2B_2})$ = $I(\rho_{AB})$. Here, $I(\rho_{AB})$ is a correlation measure and $\rho_{A_1A_2B_1B_2}:=\Sigma_{A} \otimes \Sigma_B (\rho_{AB})$ and $\rho_{A_iB_i}:=\operatorname{Tr}_{A_{\bar{j}}B_{\bar{j}}}(\rho_{A_1A_2B_1B_2})$, where $j \neq i$. The notation $S(\mathbb{H}_x) $ refers to the set of density matrices over Hilbert space $ \mathbb{H}_x  $. One of the possible examples of the operations   $\Sigma_A, \Sigma_B$ are quantum cloning operations. Using local and nonlocal cloning operations, entanglement was broadcasted, where the input state was a two-qubit system \cite{buzeketal,cloningbuzek,bandKar,Ighui,Chat,Ajain}, a qubit-qudit system \cite{Mundra}, or a qutrit-qutrit system \cite{Mundra21}. In addition, there has been a recent progress in broadcasting of other general resources, including coherence and correlation that goes beyond the notion of entanglement \cite{Chat,Ajain,Mundra,UKsharma}.
\par Entanglement is a weaker form of nonlocality compared to steering and Bell nonlocality. This is because there exists entangled states which are unsteerable and local. In this work, we explore the possibility of broadcasting steerable and Bell nonlocal states. We show that, contrary to the case of entanglement, it is in fact impossible to broadcast Bell nonlocality when we use BH local cloners as local operations. Furthermore, we also apply $10^5$ simulated general local unitaries to conjecture the impossibility of broadcasting nonlocality. This study is done under the purview of the Bell-CHSH inequality \cite{CHSH}. On the other hand, while checking for quantum steering, using the well-known CJWR inequality \cite{SteeringIeq3}, we find that when the number of observables is more than 6, broadcasting of steering is possible. However, we find that  Werner \cite{RFWerner} and Bell diagonal states become unsteerable for all measurement settings after they undergo the above BH local cloning transformation. We also extend our study to general three-qubit systems where we use the Svetlichny's inequality \cite{Svetlichny} to demonstrate that genuine tripartite nonlocality cannot be broadcasted using BH local quantum cloners. 
 \par The paper is organized as follows: In section \ref{prelim} we recapitulate the basic notions pertinent to our study, in section \ref{broad}, we present our results on broadcasting in two qubit systems followed by the result obtained in tripartite systems in section \ref{broadtri}. In section \ref{broadgen}, we report on our numerical results followed by the conclusion in section \ref{conc}.

\section{Preliminaries}\label{prelim}
\noindent In this section, we briefly revisit some basic concepts such as nonlocality of quantum states and quantum cloning. All these concepts will be useful and relevant for the main findings of our manuscript. The notations $ A, B, \text{ and }C $ represent the system labels, and when written in bold, they represent the respective measurement operators.
\subsection{General two- and three-qubit states }
\noindent To begin with, we recall the definition of a general two-qubit state shared between two parties Alice ($A$) and Bob ($B$). It's representation in the canonical form is given as,
\begin{eqnarray}
\rho_{AB}&:=&\frac{1}{4}\big[\mathbb{I}_4+\sum_{i=1}^{3}\big(x_{i}(\sigma_{i}\otimes \mathbb{I}_2)+ y_{i}(\mathbb{I}_2\otimes\sigma_{i})\big)\nonumber\\
&+&\sum_{i,j=1}^{3}t_{ij}(\sigma_{i}\otimes\sigma_{j})\big]=\{\vec{x},\:\vec{y},\: \mathbb{T}\}\:\:\: \mbox{(say),}
\label{eq:genmixedstate}
\end{eqnarray}
where $x_i=\operatorname{Tr}[\rho_{AB}(\sigma_{i}\otimes \mathbb{I}_2)]$ and $y_i=\operatorname{Tr}[\rho_{AB}(\mathbb{I}_2\otimes\sigma_{i})]$ are local Bloch vectors. The correlation matrix is given by $\mathbb{T}=[t_{ij}]$ where $t_{ij}=\operatorname{Tr}[\rho_{AB}(\sigma_i\otimes\sigma_{j})]$. Here [$\sigma_i;\:i$ = $\{1,2,3\}$] are $2\otimes 2$ Pauli matrices, and $\mathbb{I}_n$ is the identity matrix of dimension $n \times n$.

Similarly, we define the general form of a three-qubit state shared between three parties Alice ($A$), Bob ($B$), and Charlie ($C$) as,
\begin{align}
\rho_{ABC}&:= \frac{1}{8}\Bigg[\mathbb{I}_8 
+\sum_{i=1}^{3}\big(x_{i} (\sigma_{i}\otimes \mathbb{I}_2 \otimes \mathbb{I}_2) + y_{i}(\mathbb{I}_2\otimes\sigma_{i} \otimes \mathbb{I}_2)\nonumber\\
& \qquad + z_{i} (\mathbb{I}_2 \otimes \mathbb{I}_2 \otimes\sigma_{i}) \big)
+\sum_{i,j=1}^{3} \big( t_{ij} (\mathbb{I}_2 \otimes\sigma_{i}\otimes\sigma_{j}) \nonumber\\
& \qquad + v_{ij} (\sigma_{i} \otimes \mathbb{I}_2 \otimes\sigma_{j}) + w_{ij} (\sigma_{i}\otimes\sigma_{j} \otimes\mathbb{I}_2 )\big) \nonumber\\
& \qquad + \sum_{i,j,k=1}^{3} G_{ijk} (\sigma_{i}\otimes\sigma_{j}\otimes\sigma_{k}) \Bigg]\nonumber\\
&=\{\vec{x},\:\vec{y},\:\vec{z},\:\mathbb{T},\:\mathbb{V},\: \mathbb{W}, \:\mathbb{G}\}\:\:\: \mbox{(say),}
\label{eq:genmixedstate3}
\end{align}
where the coefficients can be obtained by tracing the matrix multiplication of $\rho_{ABC}$ and their corresponding terms. For example, $x_i = \operatorname{Tr}[\rho_{ABC}(\sigma_{i}\otimes \mathbb{I}_2 \otimes \mathbb{I}_2)]$, $t_{ij} = \operatorname{Tr}[\rho_{ABC}(\mathbb{I}_2 \otimes\sigma_{i}\otimes\sigma_{j})]$, and $G_{ijk} = \operatorname{Tr}[\rho_{ABC}(\sigma_{i}\otimes\sigma_{j}\otimes\sigma_{k})]$.

\subsection{Nonlocality of a Quantum State }\label{subsec:nonlocality}
\noindent Bell nonlocality is a phenomenon arising out of some measurements made on a composite system, contradicting the assumptions of locality and realism. Bell's inequality provides a convenient tool to detect nonlocality. Any state that violates this inequality is said to exhibit Bell nonlocality. Bell nonlocality, \textit{a-priori} has nothing to do with quantum mechanics, however  certain correlations arising from quantum states violate a suitably chosen Bell's inequality and thus underscore the presence of nonlocal correlations.
\par Quantum steering is the manifestation of the non-classical correlations obtained between the outcomes of measurements applied on one part of an entangled state and the  post-measurement state remaining with the other part. Although noted by Schrodinger in his seminal paper \cite{SchSteering}, the notion of steering gained prominence with an operational reformulation in \cite{hySteering1}. A steering test can be seen as an entanglement test where one of the parties performs measurements. Thus, steering can be considered as a form of nonlocality occupying a place between entanglement and Bell nonlocality \cite{hySteering1}.
\par A much weaker form of nonlocality is quantum entanglement. A quantum state is called entangled if it cannot be expressed as a convex combination of product states. The inability to write the state in the above form gives rise to the identification of entanglement which is one of the most important quantum resources.

We now put down formal mathematical definitions of the above features.
\par \textbf{Bell bipartite nonlocality:} Consider that there are two parties Alice and Bob. Let $\mathcal{O}_{\mathbf{A}}$ and $\mathcal{O}_{\mathbf{B}}$ represent the set of observables on Alice's and Bob's side, respectively. Then  we have
\begin{equation}
P(a,b \vert \mathbf{A},\mathbf{B}; \rho_{AB}) = \operatorname{Tr}\left [(\Pi_{a}^{\mathbf{A}} \otimes \Pi_{b}^{\mathbf{B}} ) \rho_{AB}\right],
\end{equation}
where $P(a,b \vert \mathbf{A},\mathbf{B}; \rho_{AB})$ represents the probability of obtaining outcomes $a$ and $b$. Here $a$ and $b$ are respective eigenvalues of $\mathbf{A} \in \mathcal{O}_{\mathbf{A}} ,\mathbf{B} \in \mathcal{O}_{\mathbf{B}} $ and $ \Pi_{a}^{\mathbf{A}}, \Pi_{b}^{\mathbf{B}} $ are the respective projectors.

Bell nonlocality is said to be exhibited by a state $\rho_{AB}$ if there is at least one probability which cannot be written in the form given below:
\begin{equation}\label{eq:bell-nonlocality-break}
P(a,b \vert \mathbf{A},\mathbf{B}; \rho_{AB}) =  \sum_{\xi} p(\xi) p(a \vert \mathbf{A} , \xi) p(b \vert \mathbf{B} , \xi),
\end{equation}
where $ p(a \vert \mathbf{A} , \xi)$ and $p(b \vert \mathbf{B} , \xi)$ are some probability distributions and $\xi$ is a local hidden variable. This essentially means that some correlations obtained from quantum states cannot be explained by a local hidden variable model, i.e., they cannot be expressed as in \eqref{eq:bell-nonlocality-break}.
\par In \cite{Horo1, Horo2}, Horodecki et al. found a closed-form pertaining to the violation of the Bell-CHSH inequality \cite{CHSH} optimized over all possible measurements. The maximum Bell-CHSH violation $V(\rho_{AB})$ for a two-qubit state $\rho_{AB}$ is given by 
\begin{equation}
V(\rho_{AB})=2\sqrt{M(\rho_{AB})},
\end{equation}
where $M(\rho_{AB})=m_1+m_2$, and $m_1$ and $m_2$ are the largest two eigenvalues of the matrix $\mathbb{T}^T_{\rho_{AB}}\mathbb{T}_{\rho_{AB}}$. Here, $\mathbb{T}_{\rho_{AB}}$ is the correlation matrix of the state $\rho_{AB}$.
For a two-qubit state $\rho_{AB}$ , violation of the Bell-CHSH inequality implies that the Bell-CHSH value $V(\rho_{AB})$ is greater than $2$.
In other words, a two-qubit state violates the Bell-CHSH inequality iff
\begin{equation}
M(\rho_{AB})>1.
\end{equation}
\par \textbf{Bell tripartite nonlocality:} Bell nonlocality can also be demonstrated at the tripartite level. Here a tripartite entangled state is shared between three parties Alice ($A$), Bob ($B$) and Charlie ($C$). They perform measurements on their respective part and take an account of the joint correlations. Similar to the case of the bipartite systems, if there exists a correlation which cannot be explained by a local hidden variable model, then we state that nonlocality is exhibited. However, in a pragmatic scenario, violation of a suitably chosen Bell’s inequality carries the signature of nonlocality. In our probe, we use the Svetlichny’s inequality \cite{Svetlichny,svet2} which is explained below.

Consider three parties Alice, Bob, and Charlie sharing a three-qubit quantum state $\rho_{ABC}$. Let the measurements by observers be projections onto unit vectors: $ \mathbf{A}= \vec{\sigma_1}.\vec{a} $ or $ \mathbf{A}'= \vec{\sigma_1}.\vec{a'} $ on qubit of system $A$,  $ \mathbf{B}= \vec{\sigma_2}.\vec{b} $ or $ \mathbf{B}'= \vec{\sigma_2}.\vec{b'} $ on qubit of system $B$, and $ \mathbf{C}= \vec{\sigma_2}.\vec{c} $ or $ \mathbf{C}'= \vec{\sigma_2}.\vec{c'} $ on qubit of system $C$. The Svetlichny's inequality for a three-qubit state $|\Psi_{ABC}\rangle$ is now read as
\begin{equation}\label{eq:svetlichny}
	| \langle \Psi_{ABC} | \textbf{S} | \Psi_{ABC} \rangle | \le 4,
\end{equation}
where the Svetlichny's operator is defined as 
\begin{gather}
	\textbf{S}  = (\mathbf{A} + \mathbf{A}')\otimes(\mathbf{B}\otimes \mathbf{C}' + \mathbf{B}' \otimes \mathbf{C}) \nonumber\\
	(\mathbf{A} - \mathbf{A}')\otimes(\mathbf{B}\otimes \mathbf{C} - \mathbf{B}' \otimes \mathbf{C}').
\end{gather}
The input state $|\Psi\rangle_{ABC}$ exhibits genuine tripartite nonlocality if it violates \eqref{eq:svetlichny} for all $\textbf{S}$.

\par \textbf{Quantum Steering:}  Steering, as mentioned earlier lies between entanglement and Bell nonlocality, and a state can be defined to be steerable from Alice to Bob if there is at least one probability which cannot be written as
\begin{equation}
P(a,b \vert \mathbf{A},\mathbf{B}; \rho_{AB}) = \sum_{\xi} p(\xi) p(a \vert \mathbf{A} , \xi) p(b \vert \mathbf{B} , \sigma^B_{\xi} )
\end{equation}
As we can see from the equation above, steering is an asymmetric 
property unlike Bell nonlocality and entanglement. Like Bell nonlocality and entanglement, there are several linear equalities to detect steerability. In \cite{SteeringIeq3,Costa}, authors have developed a steering inequality to check whether a bipartite state is steerable when both the parties are allowed to perform $n$  measurements on his or her part. This is given by
\begin{equation}\label{steering n}
F_n(\rho_{AB}, \{ \mathbf{A}_{i}\}_{i}, \{ \mathbf{B}_{i}\}_{i}) \coloneqq \frac{1}{\sqrt{n}} \Big|\sum_{i=1}^n\langle \mathbf{A}_i\otimes \mathbf{B}_i\rangle_{\rho_{AB}} \Big| \leq 1,
\end{equation}
where  $\langle \mathbf{A}_i \otimes \mathbf{B}_i \rangle = \text{Tr}(\rho_{AB} \mathbf{A}_i\otimes \mathbf{B}_i)$ and $ \rho_{AB}  $ is some bipartite quantum state. We say that a quantum state $\rho_{AB}$ is $n$-steerable then the following inequality holds:
\begin{equation}\label{steering n max}
    F_n^{*}(\rho_{AB}) \coloneqq \max_{\{ \mathbf{A}_{i}\}_{i}, \{ \mathbf{B}_{i}\}_{i}}\frac{1}{\sqrt{n}} \Big|\sum_{i=1}^n\langle \mathbf{A}_i\otimes \mathbf{B}_i\rangle_{\rho_{AB}} \Big| > 1.
\end{equation}
Now, if we maximize over all the observables, then it attains an optimal value $\sqrt{n}$. Therefore, we have:
\begin{equation}\label{steering n bounds}
    1 < F_n^{*}(\rho_{AB}) \leq \sqrt{n}.
\end{equation}

\subsection{Cloning} \label{subsec:cloning}
\noindent The no-cloning theorem tells us that if we are provided with an unknown quantum state $|\psi\rangle$, it is impossible to construct a complete positive trace preserving (CPTP) map $C$ which in principle  will make a perfect copy of $|\psi\rangle$ , i.e., $ C : |\psi\rangle \rightarrow  |\psi\rangle\otimes|\psi\rangle$. However, as we noted before, this never rules out the  existence of approximate quantum cloning machines. 

In this article, for our study, we consider the optimal and universal quantum cloning machine named as Buzek-Hillery (B-H) cloning machine. Mathematically, the B-H cloning machine ($U_{bh}$) is a
$M$-dimensional unitary transformation acting on a quantum state 
$|\psi_i\rangle_{a_0}$, where $i=1,\ldots ,M$. This state is to be copied on  $|0\rangle_{a_1}$, which we regard as the blank state of the machine. The initial state of the machine or copier is given by $|X\rangle_{x}$. The transformed state of the machine as a result of the cloning process is given by the set  of  state vectors  $|X_{ii}\rangle_{x}$ and $|Y_{ij}\rangle_{x}$ .  Here $a_0$, $a_1$, and $x$ are the indices used to represent the input, blank, and machine qubits, respectively. Additionally, these  transformed state vectors form an orthonormal basis of the $M$-dimensional Hilbert space.  Putting these together, the cloning transformation scheme $U_{bh}$ can be formally written as, 
\begin{eqnarray}\label{eq:bhM}
U_{bh}  |\psi_i\rangle_{a_0}|0\rangle_{a_1} |X\rangle_{x} \rightarrow c |\psi_i\rangle_{a_0} |\psi_i\rangle_{a_1}|X_{ii}\rangle_{x} +\nonumber\\
d \sum_{j \neq i}^M \left(|\psi_i\rangle_{a_0} |\psi_j\rangle_{a_1}+|\psi_j\rangle_{a_0} |\psi_i\rangle_{a_1}\right)|Y_{ij}\rangle_{x}.
\end{eqnarray}
Here, the coefficients $c$ and $d$ are complex numbers. Now, imposing the unitarity and normalization conditions on the above B-H cloner ($U_{bh}$) gives rise to the following constraints:
\begin{eqnarray}
\langle X_{ii}|X_{ii}\rangle = \langle Y_{ij}|Y_{ij}\rangle = \langle X_{ii}|Y_{ji}\rangle =1, 
\end{eqnarray}
where we have  $\langle X_{ii}|Y_{ij}\rangle = \langle Y_{ji}|Y_{ij}\rangle = \langle X_{ii}|X_{jj}\rangle =0$ with $i\neq j$ and $c^2=\frac{2}{M+1}$. Here, we consider $M=2^m$, $m$ being the number of qubits. 

In this work, we take into account the cloning of a single qubit, i.e., $M = 2$. For the sake of simplicity, we explicitly write the cloning transformations below derived from \eqref{eq:bhM}:
\begin{eqnarray}\label{eq:bhqubit}
U_{bh}  |0\rangle_{a_0}|0\rangle_{a_1} |X\rangle_{x} \rightarrow c |0\rangle_{a_0} |0\rangle_{a_1}|X_{0}\rangle_{x} +\nonumber\\
d  \big(|0\rangle_{a_0} |1\rangle_{a_1}+|1\rangle_{a_0} |0\rangle_{a_1}\big)|Y_{0}\rangle_{x} \nonumber\\
U_{bh}  |1\rangle_{a_0}|0\rangle_{a_1} |X\rangle_{x} \rightarrow c |1\rangle_{a_0} |1\rangle_{a_1}|X_{1}\rangle_{x} +\nonumber\\
d  \big(|1\rangle_{a_0} |0\rangle_{a_1}+|0\rangle_{a_0} |1\rangle_{a_1}\big)|Y_{1}\rangle_{x}.
\end{eqnarray}
In addition to the above constraints, we obtain the following constraints on the coefficients and scalar product of ancilla qubits,
\begin{gather*}\label{eq:bhcon}
   |c|^{2} + 2|d|^{2} = 1 \nonumber\\
\langle X_{0}|Y_{0}\rangle = \langle Y_{1}|Y_{0}\rangle = \langle X_{0}|X_{1}\rangle = 0.
\end{gather*}
Furthermore, we get the following expression of fidelity ($f$) of the output state with respect to the input state after the application of the above B-H quantum cloner, 
\begin{eqnarray}\label{eq:fidelityBH}
  f = |c|^2 = \frac{1}{2}\big(1 + \text{Re}(c^*d \langle X_{1}|Y_{0}\rangle + d^*c \langle Y_{1}|X_{0}\rangle) \big).
\end{eqnarray}
Let us denote $\mu = \text{Re}(c^*d \langle X_{1}|Y_{0}\rangle + d^*c \langle Y_{1}|X_{0}\rangle$). By setting both the scalar products to 1, we obtain the optimal universal B-H cloner ($U_{obh}$). On the contrary, by varying the scalar products, we obtain less optimal and more general universal B-H cloners $U_{gbh}^{\mu}$. For the optimal B-H cloner, we get $\mu = 2/3$ and $f = 5/6$. We refer to $\mu$ as a machine parameter that can be varied to get different cloners. 

In general, the density matrix of a qubit (say $\rho$) can be represented in terms of the set of $2\otimes 2$ Pauli matrices $\{\sigma_{i}\}_{i}$ where $i=\{1, 2, 3\}$, i.e., $\rho = \frac{\mathbb{I} +  \vec{x} \cdot \vec{\sigma}}{2}$. After the application of the $U_{gbh}^{\mu}$ cloning transformation, we get a two qubit output state $\rho_{out}$. As we know that symmetric cloners are better than asymmetric ones , we assume that the two qubits are in the same state i.e. $\operatorname{Tr}_{1}[\rho_{out}] = \operatorname{Tr}_{2}[\rho_{out}] = \frac{\mathbb{I} +  \mu \vec{x} \cdot \vec{\sigma}}{2}$. An important thing to note is that $\mu$ is also known as the \textit{shrinking factor} as after cloning the vector associated with Pauli matrices is shrinked by the factor of $\mu$. 

\subsubsection{Local cloning}
Let us consider we have $n$ parties $A_{1}, A_{2},\ldots,  A_{n}$ and a quantum state $\rho$ is shared between them. Additionally, each party has a qubit serving as a blank state $|0\rangle$, as well as a qubit representing the machine state $|X\rangle$ in their lab. Let us denote the joint state of $n$ blank states and $n$ machines states in the following way:
\begin{eqnarray}
E = |0\rangle \langle 0|^{\otimes n}\ \text{and}\ 
X = |X\rangle \langle X|^{\otimes n}.
\end{eqnarray}

By local cloning, we mean each party clones their share of the state $\rho$ using B-H cloning machines onto their own blank state. Formally, this cloning process can be written as follows,
\begin{eqnarray}
\tilde{\rho} = U_{A_{1}}\otimes ...\otimes U_{A_{n}} (\rho \otimes E \otimes X) U_{A_{1}}^{\dagger}\otimes...\otimes U_{A_{n}}^{{\dagger}}
\end{eqnarray}
where the state $\tilde{\rho}$ is the joint state of the input, blank, and machine states of all the parties after the application of the local B-H cloners $U_{A_{1}},\ldots, U_{A_{n}}$. Additionally, for the above process, let $\mu_{i}$ be the machine parameter associated with the cloning machine $U_{A_{i}}$. For the purpose of this work, we will use local cloning operations as it makes more sense in the case of broadcasting of resources over distances.

\section{Broadcasting of Bipartite Quantum Nonlocality}\label{broad}

\noindent The present section details the aspects of broadcasting of bipartite nonlocality and further probes the possibility of broadcasting nonlocal resources. In particular, we apply local cloning operations, as described above, on individual labs of each party sharing the resource to create more number of resources across the labs.  

 Let us consider two parties Alice (A) and Bob (B) sharing a general two qubit quantum state $\rho_{AB}$ with an additional constraint of being nonlocal (in terms of violation of Bell-CHSH inequality or the $n$-measurement steering inequality, and this will be clear from the context). Additionally, the qubit $a$ and the qubit $b$ serve as the initial blank state in Alice's and Bob's lab, respectively. Let us denote the machine qubits as $x$ and $y$ for Alice and Bob's local cloning machines, respectively.  Furthermore, we apply local cloning unitaries $U_{Aax}^{\mu_{1}}\otimes U_{Bby}^{\mu_{2}}$ on qubits ($A,a,x$) and qubits ($B,b,y$). Here, we assume that both the machine parameters, i.e., $\mu_{1}$ and $\mu_{2}$, are different for generalizing the approach. Now, first, tracing out ancilla qubits $x,y$ on Alice's and Bob's side respectively, we get the output state as $\tilde{\rho}_{ABab}$. We trace out the ($B, b$) and ($A, a$) qubits to obtain the output states $\tilde{\rho}_{Aa}$ on Alice’s side and $\tilde{\rho}_{Bb}$ on Bob’s side, respectively. Similarly, after tracing out appropriate qubits from the output state, we obtain the two plausible groups of output states $\tilde{\rho}_{Ab}$ and $\tilde{\rho}_{aB}$ across Alice's and Bob's laboratory. 
\par The expression for distant-lab output states (note that we call them distant-lab because they are in different labs) $\tilde{\rho}_{Ab}$ and $\tilde{\rho}_{aB}$ across Alice's and Bob's labs are given as
\begin{eqnarray}
\begin{aligned}
&\tilde{\rho}_{Ab}  \\ & = \operatorname{Tr}_{aBxy}\left [U_{Aax}^{\mu_{1}}\otimes U_{Bby}^{\mu_{2}}(\rho_{AB}\otimes E_{ab}\otimes X_{xy})U_{Aax}^{\mu_{1}\dagger}\otimes U_{Bby}^{\mu_{2}\dagger}\right] \\ \\
& \tilde{\rho}_{aB} \\ & = \operatorname{Tr}_{Abxy}\left [U_{Aax}^{\mu_{1}}\otimes U_{Bby}^{\mu_{2}}(\rho_{AB}\otimes E_{ab}\otimes X_{xy})U_{Aax}^{\mu_{1}\dagger}\otimes U_{Bby}^{\mu_{2}\dagger}\right].
\end{aligned}
\end{eqnarray}
 The expression for same-lab output states within Alice's and Bob's labs are given as
\begin{eqnarray}
\begin{aligned}
&\tilde{\rho}_{Aa} \\ & = \operatorname{Tr}_{Bbxy}\left [U_{Aax}^{\mu_{1}}\otimes U_{Bby}^{\mu_{2}}(\rho_{AB}\otimes E_{ab}\otimes X_{xy})U_{Aax}^{\mu_{1}\dagger}\otimes U_{Bby}^{\mu_{2}\dagger}\right] \\ \\
&\tilde{\rho}_{Bb} \\ & = \operatorname{Tr}_{Aaxy}\left [U_{Aax}^{\mu_{1}}\otimes U_{Bby}^{\mu_{2}}(\rho_{AB}\otimes E_{ab}\otimes X_{xy})U_{Aax}^{\mu_{1}\dagger}\otimes U_{Bby}^{\mu_{2}\dagger}\right].
\end{aligned}
\end{eqnarray}
Here, $E_{ab} $ and $X_{xy}$ represent the initial blank state and machine state, respectively. 

\begin{definition}
Given a nonlocal input state $\rho_{AB}$, we say that nonlocality is \textit{broadcasted} if after the application of local B-H quantum cloning operations $U_{Aax}^{\mu_{1}}\otimes U_{Bby}^{\mu_{2}}$, the distant-lab output states, i.e., $\tilde{\rho}_{Ab}$ and $\tilde{\rho}_{aB}$, are nonlocal with respect to the concerned inequalities (i.e., Bell or steering inequalities).
\end{definition}

Now, if we consider the two-qubit general mixed state between Alice and Bob $\rho_{AB}$ as shown in \eqref{eq:genmixedstate}, then Lemma \ref{lem:2qubitclonedoutput} describes the output states that we get after the application of the local B-H quantum cloners. Formally, we write the lemma in the following way.

\begin{lemma}\label{lem:2qubitclonedoutput}
Given a general two-qubit mixed state $\rho_{AB} = \{\vec{x}, \vec{y}, \mathbb{T}\}$, the final distant-lab output states after the application of local B-H cloners ($U_{Aax}^{\mu_{1}}\otimes U_{Bby}^{\mu_{2}}$) are $\tilde{\rho}_{Ab} = \tilde{\rho}_{aB} = \{\mu_{1}\vec{x}, \mu_{2}\vec{y}, \mu_{1}\mu_{2}\mathbb{T}\}$. 
\end{lemma}
\begin{proof}
The proof of this lemma is provided in Appendix~\ref{Appendix A}.
\end{proof}

It is important to note that for symmetric cloners like  Buzek-Hillery cloners outputs $\tilde{\rho}_{Ab}$ and $\tilde{\rho}_{aB}$ are identical. Interestingly, in this work we find that it is impossible to broadcast bipartite nonlocality if we restrict ourselves to general local B-H cloning machines. 

\subsection{Broadcasting of Bell Nonlocal States}

\noindent  We start with a general two-qubit state $\rho_{AB}$, shared between Alice and Bob, which violates the Bell-CHSH inequality, i.e., $ 1 < M(\rho_{AB}) \leq 2$.
In what follows, we show that it is impossible to broadcast Bell nonlocality, considered in the purview of Bell-CHSH inequality.  

\begin{theorem}
	Bell nonlocality of an arbitrary quantum state $\rho_{AB}  = \{ \vec{x}, \vec{y}, \mathbb{T}\}$ cannot be broadcasted using general B-H local quantum cloners.
\end{theorem}
\begin{proof}
	The input quantum state $\rho_{AB}$ violates the Bell-CHSH inequality:
	\begin{equation}
	1 < M(\rho_{AB}) \leq 2.
	\label{eq:CHSHrho}
	\end{equation}
	Here, $M(\rho_{AB})$ is the sum of the two largest eigenvalues ($\eta_{1},\ \eta_{2}$) of $\mathbb{T}^{T}\mathbb{T}$. Therefore, we have,
	\begin{equation}
	1 < \eta_{1} + \eta_{2} \leq 2.
	\label{eq:CHSHeig}
	\end{equation}
	According to Lemma \ref{lem:2qubitclonedoutput}, the output state $\tilde{\rho}_{Ab}$ (or equivalently $\tilde{\rho}_{aB}$), obtained after the application of local B-H quantum cloners with machine parameters $\mu_{1}$ and $\mu_{2}$, is given as,
	\begin{equation}
	\tilde{\rho}_{Ab} = \left\{  \mu_{1}\vec{x}, \mu_{2}\vec{y}, \mu_{1}\mu_{2}\mathbb{T} \right\}.
	\end{equation}
	Now, the two largest eigenvalues of $\mu_{1}^{2}\mu_{2}^{2}\mathbb{T}^{T}\mathbb{T}$ are $(\eta_{1}', \eta_{2}') \coloneqq (\mu_{1}^{2}\mu_{2}^{2}\eta_{1},\ \mu_{1}^{2}\mu_{2}^{2}\eta_{2})$. From \eqref{eq:CHSHeig}, we get
	\begin{gather}
	1 < \eta_{1} + \eta_{2} \leq 2 \nonumber\\
	\mu_{1}^{2}\mu_{2}^{2} < \mu_{1}^{2}\mu_{2}^{2}\eta_{1} + \mu_{1}^{2}\mu_{2}^{2} \eta_{2} < 2 \mu_{1}^{2}\mu_{2}^{2}\nonumber\\
	\mu_{1}^{2}\mu_{2}^{2} < \eta_{1}' + \eta_{2}' \leq 2 \mu_{1}^{2}\mu_{2}^{2}.
	\end{gather}

 However, the maximum value that the machine parameters $\mu_{1}$ and $\mu_{2}$ can take is $2/3$ which corresponds to the machine parameter of the optimal universal B-H cloner (see Section \ref{subsec:cloning}). Hence, the sum of eigenvalues is restricted by
	\begin{gather}
	\frac{1}{4} \leq \eta_{1}' + \eta_{2}' < \frac{1}{2} \nonumber\\
	M(\tilde{\rho}_{Ab}) \leq 1/2.
	\end{gather}
	
 This proves that bipartite Bell nonlocality  cannot broadcast by any BH local cloners.
\end{proof}

\subsection{Broadcasting of $n$-Steerable States}
\noindent In this section, we study the broadcasting of $n$-steerability with $n > 2$. For the case of $n=2$, steerability is equivalent to Bell nonlocality. We assume that the initial state $\rho_{AB}$ is $n$-steerable to begin with. 
\begin{theorem}\label{steering}
	The $n$-steerability of an arbitrary quantum state $\rho_{AB}  = \{ \vec{x}, \vec{y}, \mathbb{T}\}$ can be broadcasted using BH local quantum cloners if $n > 5$.
\end{theorem}
\begin{proof}

As the initial state $\rho_{AB}$ is $n$-steerable, the following holds according to \eqref{steering n bounds},
\begin{equation}\label{eq:steerrho}
1 < F_{n}^{*}(\rho_{AB}) \leq \sqrt{n}.
\end{equation}
According to Lemma \ref{lem:2qubitclonedoutput}, the output state $\tilde{\rho}_{Ab}$ (or equivalently $\tilde{\rho}_{aB}$), after the application of local B-H quantum cloners, is given as,
	\begin{equation}
	\tilde{\rho}_{Ab} = \bigg\{  \mu_{1}\vec{x}, \mu_{2}\vec{y}, \mu_{1}\mu_{2}\mathbb{T} \bigg\}.
	\end{equation} 
	
Next, we verify if $\tilde{\rho}_{Ab}$ satisfies \eqref{steering n bounds}. First, we can write
\begin{align}
  F_{n}^{*}(\tilde{\rho}_{Ab}) & = \max_{\{ \mathbf{A}_{i}\}_{i}, \{ \mathbf{B}_{i}\}_{i}} \frac{1}{\sqrt{n}} \Bigg| \sum_{i=1}^{n}\langle \mathbf{A}_{i} \otimes \mathbf{B}_{i} \rangle_{\tilde{\rho}_{Ab}} \Bigg|  \nonumber \\
  & = \max_{\{ \mathbf{A}_{i}\}_{i}, \{ \mathbf{B}_{i}\}_{i}} \frac{1}{\sqrt{n}} \Bigg| \sum_{i=1}^{n} \operatorname{Tr}[\tilde{\rho}_{Ab} \mathbf{A}_{i} \otimes \mathbf{B}_{i}] \Bigg|\label{steering out1}.
\end{align}
Let $\mathbf{A}_{i} = \Hat{e}_{i}\cdot\vec{\sigma}$ and $\mathbf{B}_{i} = \Hat{f}_{i}\cdot\vec{\sigma}$, where the vector $\vec{\sigma} = (\sigma_{1}, \sigma_{2}, \sigma_{3})^{\mathsf{T}}$ is composed of Pauli matrices. Additionally, $\Hat{e}_{i} \in \mathbb{R}^{3}$ are unit vectors and $\Hat{f}_{i} \in \mathbb{R}^{3}$ are orthogonal vectors. Therefore, we can rewrite \eqref{steering out1} in the following way:
\begin{align}
 & F_{n}^{*}(\tilde{\rho}_{Ab}) \notag \\ & = \max_{\{ \Hat{e}_{i}\}_{i}, \{ \Hat{f}_{i}\}_{i}} \frac{1}{\sqrt{n}} \Bigg| \sum_{i=1}^{n} \operatorname{Tr}\big[\tilde{\rho}_{Ab} \Hat{e}_{i}\cdot\vec{\sigma} \otimes \Hat{f}_{i}\cdot\vec{\sigma}\big] \Bigg|\nonumber \\
  & = \max_{\{ \Hat{e}_{i}\}_{i}, \{ \Hat{f}_{i}\}_{i}} \frac{1}{\sqrt{n}} \Bigg| \sum_{i=1}^{n} \operatorname{Tr}\Big[\tilde{\rho}_{Ab} \sum_{r=1}^{3} e_{ir} \sigma_{r} \otimes \sum_{r'=1}^{3} f_{ir'} \sigma_{r'}\Big] \Bigg| \nonumber \\
  & = \max_{\{ \Hat{e}_{i}\}_{i}, \{ \Hat{f}_{i}\}_{i}}\frac{1}{\sqrt{n}} \Bigg| \sum_{i=1}^{n} \operatorname{Tr}\Big[\tilde{\rho}_{Ab} \sum_{r,r'=1}^{3} e_{ir} f_{ir'} \sigma_{r} \otimes \sigma_{r'}\Big] \Bigg|\nonumber \\
  & = \max_{\{ \Hat{e}_{i}\}_{i}, \{ \Hat{f}_{i}\}_{i}} \frac{1}{\sqrt{n}} \Bigg| \sum_{i=1}^{n} \sum_{r,r'=1}^{3} e_{ir} f_{ir'} \operatorname{Tr}\Big[\tilde{\rho}_{Ab}  \sigma_{r} \otimes \sigma_{r'}\Big] \Bigg|\nonumber \\
  & = \max_{\{ \Hat{e}_{i}\}_{i}, \{ \Hat{f}_{i}\}_{i}} \frac{1}{\sqrt{n}} \Bigg| \sum_{i=1}^{n} \sum_{r,r'=1}^{3} e_{ir} f_{ir'} \mu_{1}\mu_{2}\operatorname{Tr}\Big[\rho_{AB}  \sigma_{r} \otimes \sigma_{r'}\Big] \Bigg|.
\end{align}
The last equality is due to Lemma \ref{lem:2qubitclonedoutput}. Upon further simplification, we have, 
\begin{align}
    & F^{*}_{n}(\tilde{\rho}_{Ab}) \notag \\ & = \max_{\{ \Hat{e}_{i}\}_{i}, \{ \Hat{f}_{i}\}_{i}} \frac{\mu_{1}\mu_{2}}{\sqrt{n}} \Bigg| \sum_{i=1}^{n} \sum_{r,r'=1}^{3} e_{ir} f_{ir'} \operatorname{Tr}\Big[\rho_{AB} \sigma_{r} \otimes \sigma_{r'}\Big] \Bigg|\nonumber \\
     & =  \max_{\{ \Hat{e}_{i}\}_{i}, \{ \Hat{f}_{i}\}_{i}}\frac{\mu_{1}\mu_{2}}{\sqrt{n}} \Bigg| \sum_{i=1}^{n} \operatorname{Tr}\Big[\rho_{AB} \sum_{r=1}^{3} e_{ir} \sigma_{r} \otimes \sum_{r'=1}^{3} f_{ir'} \sigma_{r'}\Big]\Bigg| \nonumber \\
     & = \max_{\{ \Hat{e}_{i}\}_{i}, \{ \Hat{f}_{i}\}_{i}} \frac{\mu_{1}\mu_{2}}{\sqrt{n}} \Bigg| \sum_{i=1}^{n} \operatorname{Tr}\big[\rho_{AB} \Hat{e}_{i}\cdot\vec{\sigma} \otimes \Hat{f}_{i}\cdot\vec{\sigma}\big] \Bigg| \nonumber \\
     & = \max_{\{ A_{i}\}_{i}, \{ B_{i}\}_{i}} \frac{\mu_{1}\mu_{2}}{\sqrt{n}} \Bigg| \sum_{i=1}^{n} \operatorname{Tr}\big[\rho_{AB} A_{i} \otimes B_{i}\big] \Bigg| \nonumber \\
     & = \mu_{1}\mu_{2} F_{n}^{*}(\rho_{AB}).
\end{align}
Now, according to \eqref{eq:steerrho}, we have
\begin{equation}
    \mu_{1}\mu_{2} < F^{*}_{n}(\tilde{\rho}_{Ab}) \leq \mu_{1}\mu_{2} \sqrt{n}.
\end{equation}

We cannot broadcast $n$-steering for all the $n$-steerable input states $\rho_{AB}$ because the lower bound $\mu_{1}\mu_{2} < 1; \forall \mu_{1}, \mu_{2} \in [0, 2/3] $. In contrast, for broadcasting $n$-steering for some input states, the upper bound should be greater than 1. Therefore, we have
\begin{gather}
    \mu_{1}\mu_{2} \sqrt{n} > 1, \nonumber\\
    n > \frac{1}{(\mu_{1}\mu_{2})^{2}}.
\end{gather}
Furthermore, we substitute $\mu_{1} = \mu_{2} = 2/3$ for obtaining the best lower bound of $n$. These correspond to the machine parameters of the optimal universal B-H quantum cloners. Hence,
\begin{equation}
    n > \frac{81}{16} = 5.0625.
\end{equation}
In conclusion, for $n \ge 6$, some of the output states are $n$-steerable after the application of local optimal B-H cloners.
\end{proof}

Theorem \ref{steering} shows that for certain number of measurements, the corresponding steering inequality is violated by the output states for some input states $\rho_{AB}$. Now, we take two types of two-qubit pure states and show that they become unsteerable under any number of measurement settings, i.e., they admit a local hidden state (LHS) model [Theorem \ref{thm:werner} and \ref{thm:belldiagonal}]. In the work \cite{Bowles}, the authors had derived a sufficient criterion for the unsteerability of a two-qubit state based on its bloch parameters. States which satisfy the criteria admits a LHS model, thereby rendering them unsteerable under any number of measurement settings. In the following theorems, we show that in some instances when a steerable Werner state or Bell diagonal state is used for broadcasting of steerability, states, which admit LHS, are obtained as outputs.  
\begin{theorem}\label{thm:werner}
	The application of B-H local cloner on a steerable Werner state makes it unsteerable.
\end{theorem}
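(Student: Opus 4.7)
The plan is to track how the B--H state dependent local cloner $U_{lbhsd}$ acts on the canonical parameters of a Werner state, and then verify that the output falls inside the region where the Bowles \emph{et al.}\ sufficient condition \cite{Bowles} guarantees the existence of a local hidden state (LHS) model. Since the preceding theorems only dealt with three--setting steerability, the new content here is the upgrade from ``3-unsteerable'' to ``unsteerable under any number of settings''.

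First, I would write a generic Werner state in the canonical form of Eq.~\ref{eq:genmixedstate}, namely $\rho_{12}=\{\vec{0},\vec{0},p\,\mathbb{I}_3\}$ (up to a sign, depending on the Bell-basis convention), with $|p|\leq 1$. Steerability of the input restricts $|p|$ to lie above the relevant threshold, but that window plays no role in what follows, only the trivial upper bound $|p|\leq 1$ matters.

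Next, I reuse the parameter transformation rule already established inside the proofs of Theorems~1 and 3: the local cloner leaves both Bloch vectors at zero and rescales the correlation matrix by $\mu^{2}$, so $\tilde{\rho}_{12}=\{\vec{0},\vec{0},\mu^{2}p\,\mathbb{I}_3\}$. Invoking the same Schwarz bound $\mu\leq 1/\sqrt{2}$ used earlier gives the effective correlation parameter $|\mu^{2}p|\leq 1/2$. The output is therefore again a Werner-type state, but one whose correlation strength has been compressed into the regime where an LHS model is known to exist.

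Finally, I would feed this into the Bowles \emph{et al.}\ criterion. For states with vanishing Bloch vectors and correlation matrix $c\,\mathbb{I}_3$, the criterion specializes to $|c|\leq 1/2$, which is precisely where $\tilde{\rho}_{12}$ now sits. Hence $\tilde{\rho}_{12}$ admits an LHS model and is unsteerable under any number of measurement settings. The main obstacle I anticipate is not the algebra, which is a one-line substitution, but the bookkeeping on the Bowles inequality itself: one must check that, when restricted to inputs of the form $\{\vec{0},\vec{0},c\,\mathbb{I}_3\}$, the general criterion of \cite{Bowles} really does reduce cleanly to $|c|\leq 1/2$, and that the Schwarz-derived bound $\mu^{2}\leq 1/2$ is uniform enough to place the output inside the LHS region for every admissible value of $p$. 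Once that reduction is made explicit, the theorem follows immediately from the cloner action on the canonical parameters.
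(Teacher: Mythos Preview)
Your proposal is correct and follows essentially the same route as the paper: write the Werner state in canonical form with vanishing Bloch vectors, apply the local-cloner rescaling $\mathbb{T}\mapsto \mu^{2}\mathbb{T}$, invoke the Schwarz bound $\mu\leq 1/\sqrt{2}$, and then check that the output satisfies the Bowles \emph{et al.} sufficient condition (which for $\vec{x}=\vec{0}$ reduces to $2\sqrt{\eta_{\max}}\leq 1$, i.e., $|\mu^{2}p|\leq 1/2$). The only cosmetic difference is that the paper solves the inequality for $p$ and observes that the resulting bound $p\leq 1/(2\mu^{2})$ is automatically met on the whole Werner range, whereas you bound the effective parameter $\mu^{2}p$ directly.
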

\begin{proof}
	The bloch representation of Werner states is given as
	\begin{equation}
	\rho_{AB} = \bigg\{  \vec{0}, \vec{0}, \mathbb{T} \bigg\}.
	\end{equation}
	where $\mathbb{T} = \text{diag}(p , -p, p)$ and $p \in [0, 1]$ is the visibility factor of the Werner state. After the application of B-H local cloning machine, the distant-lab output state is given as,
	\begin{equation}
	\tilde{\rho}_{Ab} = \bigg\{  \vec{0}, \vec{0},\mu_{1}\mu_{2}\mathbb{T} \bigg\}.
	\end{equation}
	Now the criterion in \cite{Bowles} can also be written as,
	\begin{equation}\label{eq:unsteercrit}
	|\vec{x}|^{2} + 2\sqrt{\eta_\text{max}} \leq 1,
	\end{equation}
 where $ \eta_{\text{max}} $ is the largest eigenvalue of $ \mathbb{T}^\dagger \mathbb{T} $. Now, substituting the bloch vectors of Werner states into the above equation, we get
	\begin{equation}
	\begin{split}
	2\sqrt{\eta_{\text{max}}} \leq 1 \\
	2 \mu_{1}\mu_{2} p \leq 1 \\
	p \leq \frac{1}{2\mu_{1}\mu_{2}}.
	\end{split}
	\end{equation}
	
 For maximum value of $\mu_{1}$ and $\mu_{2}$, i.e., $\frac{2}{3}$, we get $p\leq1$. For other lower values of $\mu$, we get the same range of $p$ for which it is satisfying the criterion. This means that the whole range will satisfy the unsteerability criterion after local B-H cloning. This concludes the proof. 
\end{proof}

\begin{theorem}\label{thm:belldiagonal}
	The application of B-H local cloner on a steerable Bell-diagonal state makes it unsteerable.
\end{theorem}
\begin{proof}
	Bell-diagonal states are given as
	\begin{equation}
	\rho_{AB} = \bigg\{  \vec{0}, \vec{0}, \mathbb{T} \bigg\}.
	\end{equation}
	where $\mathbb{T} = \text{diag}(c_{1} , c_{2}, c_{3})$ and $-1 \leq c_{i} \leq 1$ for all $i \in \{1, 2, 3\}$. After the application of BH local cloning machine, the output states are 
	\begin{equation}
	\tilde{\rho}_{Ab} = \bigg\{  \vec{0}, \vec{0},\mu_{1}\mu_{2}\mathbb{T} \bigg\}.
	\end{equation}
	
 Substituting the bloch vectors of the above output state (assuming the first case to be $c_{1} \geq c_{2} \geq c_{3}$ without the loss of generality) into \eqref{eq:unsteercrit}, we obtain
	\begin{equation}
	\begin{split}
	2\sqrt{\eta_{\text{max}}} \leq 1,\\
	2 \mu_{1}\mu_{2} c_{1} \leq 1,\\
	c_{1} \leq \frac{1}{2\mu_{1}\mu_{2}}.
	\end{split}
	\end{equation}
	For maximum value of $\mu_{1}$ and $\mu_{2}$, i.e., $\frac{1}{\sqrt{2}}$, we get $c_{1}\leq1$. For other lower values of $\mu$, we get the same range of $c_{1}$ for which it is satisfying the criterion. Similar line of reasoning can be used for other two cases as well, i.e., for the cases where $c_{2} \geq c_{3} \geq c_{1}$ and $c_{3} \geq c_{1} \geq c_{2}$.  This implies that the whole range will satisfy the unsteerability criterion after local B-H cloning. This concludes the proof.
\end{proof}

\section{Broadcasting of Tripartite Quantum Nonlocality}\label{broadtri}
In this section, we show the details of broadcasting of genuine tripartite nonlocality under the ambit of Svetlichny's inequality. We consider the application of B-H local cloning transformation as shown by \eqref{eq:bhqubit} on each party individually.
Let us consider three distant parties Alice, Bob, and Charlie sharing a general three-qubit quantum state $\rho_{ABC}$. Similar to the two-qubit case, the qubits $a, b$, and $c$ serve as the initial blank state in Alice's, Bob's and Charlie's lab, respectively. we denote the machine qubits as $x, y$, and $z$ for Alice's, Bob's and Charlie's local cloning machines, respectively. They apply their corresponding local cloning transformations, i.e., $U_{Aax}^{\mu_{1}}\otimes U_{Bby}^{\mu_{2}}\otimes U_{Ccz}^{\mu_{3}}$ on qubits ($A,a,x$), qubits ($B,b,y$), and qubits ($C,c,z$), respectively. One of the output states $\tilde{\rho}_{Abc}$ is obtained as,
\begin{gather}
\begin{aligned}
\tilde{\rho}_{Abc} = Tr_{aBCxyz}\Big[&U_{Aax}^{\mu_{1}}\otimes U_{Bby}^{\mu_{2}}\otimes U_{Ccz}^{\mu_{3}}(\rho_{ABC}\otimes E_{abc}\otimes X_{xyz})\\ 
&U_{Aax}^{\mu_{1}\dagger} \otimes U_{Bby}^{\mu_{2}\dagger}\otimes U_{Ccz}^{\mu_{3}\dagger}\Big],
\end{aligned}
\end{gather}
where $E_{abc}$ and $X_{xyz}$ denotes the initial blank and machine states, respectively. The other output states can be obtained in a similar manner.
\begin{definition}
Given a genuine nonlocal input state $\rho_{ABC}$, we say that nonlocality is braodcasted if after the application of local B-H quantum cloning operation $U_{Aax}^{\mu_{1}}\otimes U_{Bby}^{\mu_{2}}\otimes U_{Ccz}^{\mu_{3}}$, the distant-lab output states are genuinely nonlocal with respect to Svetlichny's inequality.
\end{definition}

Now if we consider a three-qubit quantum state between three distant parties as shown in \eqref{eq:genmixedstate3}, then Lemma \ref{lem:3qubitclonedoutput} shows one of the distant-lab output states that we get after the local B-H cloning operations. Also note that as we are using symmetric B-H local cloners, all the distant-lab output states, that obtained after the cloning process, are the same.

\begin{lemma}\label{lem:3qubitclonedoutput}
Given a general three-qubit mixed state $\rho_{ABC} = \{\vec{x},\:\vec{y},\:\vec{z},\:\mathbb{T},\:\mathbb{V},\: \mathbb{W}, \:\mathbb{G}\}$, one of the final distant-lab output states after the application of local BH cloners ($U_{Aax}^{\mu_{1}}\otimes U_{Bby}^{\mu_{2}}\otimes U_{Ccz}^{\mu_{3}}$) is $\tilde{\rho}_{Abc} = \{\mu_{1}\vec{x},\:\mu_{2}\vec{y},\:\mu_{3}\vec{z},\:\mu_{2}\mu_{3}\mathbb{T},\:\mu_{1}\mu_{3}\mathbb{V},\: \mu_{1}\mu_{2}\mathbb{W}, \:\mu_{1}\mu_{2}\mu_{3}\mathbb{G}\}$.
\end{lemma}
\begin{proof}
We do not explicitly prove this because the proof follows a similar line of reasoning used for the proof of Lemma \ref{lem:2qubitclonedoutput}.
\end{proof}

First, we begin with an assumption that $\rho_{ABC}$ showcases genuine tripartite nonlocality, i.e, it violates Svetlichny inequality. In the following theorem, we prove the impossibility of broadcasting genuine tripartite nonlocality.

\begin{theorem}
Genuine tripartite nonlocality of an arbitrary quantum state $\rho_{ABC}$ cannot be broadcasted using B-H local quantum cloners.
\end{theorem}
\begin{proof}
As the initial state showcases genuine tripartite nonlocality, it violates Svetlichny inequality
\begin{equation}\label{initialstateSI}
 4 < S(\rho_{ABC}) \leq 4\sqrt{2}.
\end{equation}

 Now, we check the violation of Svetlichny inequality $S(\cdot)$ for one of the distant-lab output states $\tilde{\rho}_{Abc}$ obtained after the application of local B-H quantum cloners.
\begin{gather}\label{eq:SIneqAbc}
    S(\tilde{\rho}_{Abc}) = \max_{S}\operatorname{Tr}[\tilde{\rho}_{Abc}\textbf{S}].
\end{gather}
where the operator $S$ can be represented in the following way:
\begin{gather*}
    \textbf{S}  = (\mathbf{A} + \mathbf{A}')\otimes(\mathbf{B}\otimes \mathbf{C}' + \mathbf{B}' \otimes \mathbf{C}) \nonumber\\
     (\mathbf{A} - \mathbf{A}')\otimes(\mathbf{B}\otimes \mathbf{C} - \mathbf{B}' \otimes C').
\end{gather*}
Here, let $\mathbf{A} = \hat{d}\cdot \vec{\sigma}$ and $\mathbf{A}' = \hat{d}'\cdot \vec{\sigma}$ for Alice's system. Similarly we have $\mathbf{B} = \hat{e}\cdot \vec{\sigma}$ and $\mathbf{B}' = \hat{e}'\cdot \vec{\sigma}$ for Bob's system, and $\mathbf{C} = \hat{f}\cdot \vec{\sigma}$ and $\mathbf{C}' = \hat{f}'\cdot \vec{\sigma}$ for Charlie's system. Additionally, let $\Hat{m} = (m_{1}, m_{2}, m_{3})$ be a unit vector in $\mathbb{R}$, where $m\in\{\hat{d}, \hat{d}', \hat{e}, \hat{e}', \hat{f}, \hat{f}' \}$. Substituting in \eqref{eq:SIneqAbc}, we get
\begin{multline}
 S(\tilde{\rho}_{Abc}) = \max_{\substack{\mathbf{A}, \mathbf{A}', \mathbf{B},\\ \mathbf{B}', \mathbf{C} , \mathbf{C}'}} \operatorname{Tr}\Bigg[\tilde{\rho}_{Abc}\Big((\mathbf{A} + \mathbf{A}')\otimes(\mathbf{B}\otimes \mathbf{C}' + \mathbf{B}' \otimes \mathbf{C})\\
     (\mathbf{A} - \mathbf{A}')\otimes(\mathbf{B}\otimes \mathbf{C} - \mathbf{B}' \otimes \mathbf{C}')\Big)\Bigg].
\end{multline}
We simplify the above equation as,
\begin{widetext}
\begin{align}
S(\tilde{\rho}_{Abc}) & = \max_{\hat{d}, \hat{d}', \hat{e}, \hat{e}', \hat{f}, \hat{f}'} \operatorname{Tr}\Bigg[\tilde{\rho}_{Abc} \Big( \sum_{i,j,k=1}^{3} (d_{i} + d'_{i})(e_{j} f'_{k} + e'_{j} f_{k}) 
 + (d_{i} - d'_{i})(e_{j} f_{k} - e'_{j} f'_{k}) (\sigma_{i}\otimes \sigma_{j}\otimes \sigma_{k})  \Big)\Bigg]\\
 & = \max_{\hat{d}, \hat{d}', \hat{e}, \hat{e}', \hat{f}, \hat{f}'} \sum_{i,j,k=1}^{3} (d_{i} + d'_{i})(e_{j} f'_{k} + e'_{j} f_{k}) + (d_{i} - d'_{i}) (e_{j} f_{k} - e'_{j} f'_{k}) \operatorname{Tr}\Big[\tilde{\rho}_{Abc}(\sigma_{i}\otimes \sigma_{j}\otimes \sigma_{k}) \Big]\\
 & = \max_{\hat{d}, \hat{d}', \hat{e}, \hat{e}', \hat{f}, \hat{f}'} \sum_{i,j,k=1}^{3} (d_{i} + d'_{i})(e_{j} f'_{k} + e'_{j} f_{k})
 + (d_{i} - d'_{i}) (e_{j} f_{k} - e'_{j} f'_{k}) \mu_{1}\mu_{2}\mu_{3}\operatorname{Tr}\Big[\rho_{ABC}(\sigma_{i}\otimes \sigma_{j}\otimes \sigma_{k}) \Big]\\
 & = \mu_{1}\mu_{2}\mu_{3} \max_{\mathbf{A}, \mathbf{A}', \mathbf{B},\mathbf{ B}', \mathbf{C} , \mathbf{C}'} \operatorname{Tr}\Bigg[\rho_{ABC}\Big((\mathbf{A} + \mathbf{A}')\otimes(\mathbf{B}\otimes \mathbf{C}' + \mathbf{B}' \otimes \mathbf{C}) (\mathbf{A} - \mathbf{A}')\otimes(\mathbf{B}\otimes \mathbf{C} - \mathbf{B}' \otimes \mathbf{C}')\Big)\Bigg] .
\end{align}
\end{widetext}
The above equation can be rewritten in terms of $S(\rho_{ABC})$ as,
\begin{eqnarray}
S(\tilde{\rho}_{Abc}) = \mu_{1}\mu_{2}\mu_{3} S(\rho_{ABC}).
\end{eqnarray}
From Eq. \ref{initialstateSI}, we get
\begin{gather}
4 \mu_{1}\mu_{2}\mu_{3} \leq S(\tilde{\rho}_{Abc}) \leq 4\sqrt{2}\mu_{1}\mu_{2}\mu_{3},\\
32/27 \leq S(\tilde{\rho}_{Abc}) \leq 32\sqrt{2}/27 < 4.
\end{gather}
This means that the output state $\tilde{\rho}_{Abc}$ is not genuinely nonlocal. To conclude, we prove the impossibility of broadcasting of genuine tripartite nonlocality using B-H local quantum cloners.
\end{proof}

\section{broadcasting using general unitaries}\label{broadgen}
In this section, we try to broadcast Bell nonlocality and steering using  general arbitrary local unitaries for two-qubit systems, instead of restricting to local cloning unitaries.  We ran numerical simulation of $10^5$ unitaries generated using the Haar measure over Werner and Bell-diagonal states. 

For the setup, we have two parties, i.e., Alice and Bob, sharing a nonlocal state $\rho_{AB}$. They both apply an arbitrary local unitary ($U_{al}$) on their respective qubits. The nonlocal output states after the unitary operation are given as
\begin{equation}
\begin{split}
& \rho_{Ab}=\operatorname{Tr}_{Ba}[U_{al}\otimes U_{al} (\rho_{AB}\otimes \sigma_{ab})U_{al}^{\dagger}\otimes U_{al}^{\dagger}],\\
& \rho_{aB} =\operatorname{Tr}_{Ab}[U_{al}\otimes U_{al} (\rho_{AB}\otimes \sigma_{ab})U_{al}^{\dagger}\otimes U_{al}^{\dagger}],
\end{split}    
\end{equation}
where $\sigma_{ab} \coloneqq \dyad{00}$. We find that, for none of these states, broadcasting of nonlocality, as well as  3-steerability is possible.

\section{Conclusion}\label{conc}
\noindent Nonlocality is an extremely important ingredient in quantum information processing. As a consequence, broadcasting nonlocal resources from a few assumes considerable significance.
In this article, we show that unlike entanglement it is impossible to broadcast nonlocality as a resource for the choice of  cloning machine as the Buzek-Hillery transformation and applying them locally. We study in the purview of the Bell-CHSH inequality and CJWR steering inequality to exhibit that nonlocal resources cannot be broadcasted. The loss can also be to the extent of the state being rendered as unsteerable under any number of measurement settings. Though this is not counter intuitive but different from entanglement. Our result not only limits to local cloning operations but also extends for $10^5$ simulated local unitaries for two qubit systems. This enables us to conjecture that nonlocality exhibited in the form of violation of the Bell-CHSH inequality and CJWR steering inequality can not be broadcasted even with arbitrary local unitaries.

\section*{Acknowledgement}
Nirman Ganguly acknowledges support from the project grant received from DST-SERB (India) under the MATRICS scheme, vide file number MTR/2022/000101.

\section*{Data Availability Statement}
No data associated in the manuscript.

\onecolumngrid
\begin{appendix} 
	\section{Proof of Lemma~\ref{lem:2qubitclonedoutput}} \label{Appendix A}
	Let $\omega_{A} = \frac{\mathbb{I} + \vec{x}\cdot \vec{\sigma}}{2} \in \mathbb{C}^{2}$ be a state of quantum system $A$. Also, let the blank state and the machine state be denoted as $E_{a}$ and $X_{x}$, respectively. Let $U^{\mu}_{Aax}$ be a unitary operator applied on the joint system of $A$, $a$, and $x$, and it defines the BH cloning machine with the machine parameter $\mu$. After applying this cloning transformation to the joint system $\omega_{A} \otimes E_{a} \otimes X_{x}$, we get the following output states $\omega^{out}_{A}$ and $\omega^{out}_{a}$:
	\begin{align}
		\label{eq:omegaoutA}
		\omega^{out}_{A} & = \operatorname{Tr}_{ax}\left [U^{\mu}_{Aax} \left ( \omega_{A} \otimes E_{a} \otimes X_{x} \right)U^{\mu \dagger}_{Aax}\right ]
		= \frac{\mathbb{I} + \mu \vec{x}\cdot \vec{\sigma}}{2}\\
		\omega^{out}_{a} & = \operatorname{Tr}_{Ax}\left [U^{\mu}_{Aax} \left ( \omega_{A} \otimes E_{a} \otimes X_{x} \right)U^{\mu \dagger}_{Aax}\right ]
		= \frac{\mathbb{I} + \mu \vec{x}\cdot \vec{\sigma}}{2}\label{eq:omegaouta}
	\end{align}
	Both the output states are the same as we are using a symmetric cloning machine. Intuitively, when we compare these output states with the input state $\omega$, it is very interesting to observe that the vector associated with Pauli matrices is shrinked by the factor of $\mu$. This observation is cruicial for this proof.
	
	Now, let us expand this development to the two-qubit case. As given, Alice (A) and Bob (B) share a state $\rho_{AB}$, which is a general two-qubit state as written in \eqref{eq:genmixedstate}. 
	Additionally, let $E_{a}$ and $E_{b}$ serve as initial blank quantum states in Alice's and Bob's lab, respectively. 
	Let us denote the machine qubits as $x$ and $y$ for Alice and Bob's local cloning machines, respectively.  
	Furthermore, they apply local cloning unitaries $U_{Aax}^{\mu_{1}}\otimes U_{Bby}^{\mu_{2}}$ on their respective qubits, i.e., qubits ($A,a,x$) and qubits ($B,b,y$). 
	Here, we assume that both the machine parameters, i.e., $\mu_{1}$ and $\mu_{2}$, need not be same for generalizing the approach. Let $\tilde{\rho}_{Ab}$ and $\tilde{\rho}_{Ba}$ be nonlocal output states obtained after cloning. Here, we only show for $\tilde{\rho}_{Ab}$ as we are using symmetric cloners, and therefore, we have $\tilde{\rho}_{Ab} = \tilde{\rho}_{Ba}$.

	\begin{align}
		\tilde{\rho}_{Ab} & = \operatorname{Tr}_{aBxy}\left [U_{Aax}^{\mu_{1}}\otimes U_{Bby}^{\mu_{2}}\left(\rho_{AB}\otimes E_{a}\otimes E_{b} \otimes X_{x} \otimes X_{y} \right)U_{Aax}^{\mu_{1}\dagger}\otimes U_{Bby}^{\mu_{2}\dagger}\right] \nonumber\\
		& \overset{(a)}{=} \operatorname{Tr}_{aBxy}\Bigg [U_{Aax}^{\mu_{1}}\otimes U_{Bby}^{\mu_{2}}\Bigg(\frac{1}{4}\left[\mathbb{I}_A \otimes \mathbb{I}_B +\sum_{i=1}^{3}\big(x_{i}(\sigma^{i}_{A}\otimes \mathbb{I}_B)+ y_{i}(\mathbb{I}_A\otimes\sigma^{i}_{B})\big)+\sum_{i,j=1}^{3}t_{ij}(\sigma^{i}_{A} \otimes\sigma^{j}_{B})\right]\nonumber\\
        & \hspace{10cm} \otimes E_{a}\otimes E_{b} \otimes X_{x} \otimes X_{y} \Bigg) U_{Aax}^{\mu_{1}\dagger}\otimes U_{Bby}^{\mu_{2}\dagger}\Bigg]\\
		& \overset{(b)}{=} 1/4 \Bigg (\operatorname{Tr}_{aBxy}\left [U_{Aax}^{\mu_{1}}\otimes U_{Bby}^{\mu_{2}}\left(\mathbb{I}_A \otimes E_{a} \otimes X_{x} \otimes \mathbb{I}_B \otimes E_{b} \otimes X_{y} \right )U_{Aax}^{\mu_{1}\dagger}\otimes U_{Bby}^{\mu_{2}\dagger} \right] \\ 
		& \hspace{3cm} +\sum_{i=1}^{3} x_{i} \operatorname{Tr}_{aBxy}\left [U_{Aax}^{\mu_{1}}\otimes U_{Bby}^{\mu_{2}} \left(\sigma^{i}_{A}\otimes E_{a} \otimes X_{x} \otimes \mathbb{I}_B \otimes E_{b} \otimes X_{y} \right)U_{Aax}^{\mu_{1}\dagger}\otimes U_{Bby}^{\mu_{2}\dagger} \right]\\
		& \hspace{3cm} + \sum_{i=1}^{3} y_{i} \operatorname{Tr}_{aBxy}\left [U_{Aax}^{\mu_{1}}\otimes U_{Bby}^{\mu_{2}} \left(\mathbb{I}_{A}\otimes E_{a} \otimes X_{x} \otimes \sigma^{i}_B \otimes E_{b} \otimes X_{y} \right)U_{Aax}^{\mu_{1}\dagger}\otimes U_{Bby}^{\mu_{2}\dagger} \right] \\
		& \hspace{3cm} +\sum_{i, j=1}^{3} t_{ij} \operatorname{Tr}_{aBxy}\left [U_{Aax}^{\mu_{1}}\otimes U_{Bby}^{\mu_{2}} \left(\sigma^{i}_{A}\otimes E_{a} \otimes X_{x} \otimes \sigma^{j}_B \otimes E_{b} \otimes X_{y} \right)U_{Aax}^{\mu_{1}\dagger}\otimes U_{Bby}^{\mu_{2}\dagger} \right]\Bigg)\\
		& \overset{(c)}{=} 1/4 \Bigg (\operatorname{Tr}_{ax}\left [U_{Aax}^{\mu_{1}}\left(\mathbb{I}_A \otimes E_{a} \otimes X_{x}\right)U_{Aax}^{\mu_{1}\dagger} \right] \otimes \operatorname{Tr}_{By}\left [U_{Bby}^{\mu_{2}}\left(\mathbb{I}_B \otimes E_{b} \otimes X_{y}\right)U_{Bby}^{\mu_{2}\dagger} \right] \\ 
		& \hspace{3cm} +\sum_{i=1}^{3} x_{i} \operatorname{Tr}_{ax}\left [U_{Aax}^{\mu_{1}}\left(\sigma^{i}_A \otimes E_{a} \otimes X_{x}\right)U_{Aax}^{\mu_{1}\dagger} \right] \otimes \operatorname{Tr}_{By}\left [U_{Bby}^{\mu_{2}}\left(\mathbb{I}_B \otimes E_{b} \otimes X_{y}\right)U_{Bby}^{\mu_{2}\dagger} \right]\\
		& \hspace{3cm} + \sum_{i=1}^{3} y_{i} \operatorname{Tr}_{ax}\left [U_{Aax}^{\mu_{1}}\left(\mathbb{I}_A \otimes E_{a} \otimes X_{x}\right)U_{Aax}^{\mu_{1}\dagger} \right] \otimes \operatorname{Tr}_{By}\left [U_{Bby}^{\mu_{2}}\left(\sigma^{i}_B \otimes E_{b} \otimes X_{y}\right)U_{Bby}^{\mu_{2}\dagger} \right] \\
		& \hspace{3cm} +\sum_{i, j=1}^{3} t_{ij} \operatorname{Tr}_{ax}\left [U_{Aax}^{\mu_{1}}\left(\sigma^{i}_A \otimes E_{a} \otimes X_{x}\right)U_{Aax}^{\mu_{1}\dagger} \right] \otimes \operatorname{Tr}_{By}\left [U_{Bby}^{\mu_{2}}\left(\sigma^{j}_B \otimes E_{b} \otimes X_{y}\right)U_{Bby}^{\mu_{2}\dagger} \right]\Bigg)\\
		& \overset{(d)}{=} 1/4\left( \mathbb{I}_{A} \otimes \mathbb{I}_{b} + \sum_{i=1}^{3} x_{i} \mu_{1} (\sigma^{i}_{A} \otimes \mathbb{I}_{b}) + \sum_{i=1}^{3} y_{i} \mu_{2} (\mathbb{I}_{A} \otimes \sigma^{i}_{b}) + \sum_{i, j=1}^{3} t_{ij} \mu_{1} \mu_{2} (\sigma^{i}_{A} \otimes \sigma^{j}_{b}) \right)\\
		& = \{\mu_{1} \vec{x}, \mu_{2} \vec{y}, \mu_{1} \mu_{2}\mathbb{T} \}.
	\end{align}
	The equality (a) follows from the definition of a general two-qubit mixed state, the equalities (b) and (c) follow by expanding the expression and rearranging it for simplicity, and the equality (d) follows from \eqref{eq:omegaoutA} and \eqref{eq:omegaouta}.

\end{appendix}

\end{document}